\theoremstyle{plain}
\newtheorem{theorem}{Theorem}[section]
\newtheorem{lemma}[theorem]{Lemma}
\newtheorem{proposition}[theorem]{Proposition}
\newtheorem{corollary}[theorem]{Corollary}
\theoremstyle{definition}
\newtheorem{example}[theorem]{Example}
\theoremstyle{remark}
\newtheorem{remark}[theorem]{Remark}
\def\au{\mathcal{A}}
\begin{document}
\title[The Holonomy Decomposition of CSFA]{The Holonomy Decomposition of Circular Semi-Flower Automata}
\author[S. N. Singh]{Shubh Narayan Singh}
\address{Department of Mathematics, Central University of Bihar, Patna, India}
\email{shubh@cub.ac.in}
\author[K. V. Krishna]{K. V. Krishna}
\address{Department of Mathematics, IIT Guwahati, Guwahati, India}
\email{kvk@iitg.ac.in}


\begin{abstract}
Eilenberg's holonomy decomposition is useful to ascertain the structural properties of automata. Using this method, Egri-Nagy and Nehaniv characterized the absence of certain types of cycles in automata. In the direction of studying the structure of automata with cycles, this work focuses on a special class of semi-flower automata and establish the holonomy decompositions of certain circular semi-flower automata.
\end{abstract}

\subjclass[]{68Q70, 20M35, 54H15}

\keywords{Transformation monoids, Semi-flower automata, Holonomy Decomposition.}

\maketitle

\section*{Introduction}

Usefulness of a decomposition method for any given system does not require any justification. The primary decomposition theorem due to Krohn and Rhodes has been considered as one of the fundamental results in the theory of automata and monoids \cite{krohn65}. Eilenberg has given a slight generalization of the primary decomposition called the holonomy decomposition \cite{eilenberg76}. Here, Eilenberg established that every finite transformation monoid divides a wreath product of its holonomy permutation-reset transformation monoids. The holonomy decomposition is also used to study the structural properties of certain algebraic structures \cite{holcombe80,kvk07}. The holonomy method appears to be relatively efficient and has been implemented computationally \cite{attila05,attila04}. One can use the computer algebra package, SgpDec \cite{nagy10} to obtain the holonomy decomposition of a given finite transformation monoid.

In order to ascertain the structure of an automaton, the holonomy decomposition considers the monoid of the automaton and looks for groups induced by the monoid permuting some set of subsets of the state set. These groups are called the holonomy groups, which are the building blocks for the components of the decomposition. Using holomony decomposition, Egri-Nagy and Nehaniv characterized the absence of certain types of cycles in automata. In fact, they proved that an automaton is algebraically cyclic-free if and only if the holonomy groups are trivial \cite{nagy05}. On the other hand, the structure of automata with cycles is much more complicated.

In the direction of studying the structure of automata with cycles, this work concentrates on a special class of semi-flower automata. A semi-flower automaton (SFA) is a trim automaton with a unique initial state that is equal to a unique final state in which all the cycles shall pass through the initial-final state \cite{giam07,shubh12}. Using SFA, the rank and intersection problem of certain submonoids of a free monoid have been studied \cite{giam08,singh11,singh12}.

In this paper, we consider circular SFA classified by their bpi(s) -- branch point(s) going in -- and obtain the holonomy decompositions for circular SFA with at most two bpis. The main work of the paper is presented in Section 2. Before that, in Section 1, we present some preliminary concepts and results that are used in this work. Finally, Section 3 concludes the paper.

\section{Preliminaries}

This section has two subsections on the holonomy decomposition and automata to present a necessary background material on these topics.

\subsection{The Holonomy Decomposition}

In this subsection, we provide a brief details on the holonomy decomposition which will be useful in this paper. For more details one may refer \cite{nehaniv05,attila05,eilenberg76}.

We first fix our notation regarding functions. We write the argument of a function $f: X \longrightarrow X$ on its left so that $xf$ is the value of the function $f$ at the argument $x$. The \emph{rank} of the function $f$, denoted by $\mbox{rank}(f)$, is the cardinality of its image set $Xf$.  Further, the composition of functions is designated by concatenation, with the leftmost function understood to apply first so that $xfg = (xf)g$.

A pair $(P, M)$ with a nonempty finite set $P$ and a monoid $M$ is called a \emph{transformation monoid} if there is an embedding $\phi: M \hookrightarrow \mathscr{T}(P)$, where $\mathscr{T}(P)$ is the monoid of all functions on $P$  with respect to the composition. A transformation monoid $(P, M)$ is called \emph{transformation group} if $M$ is a group. Let us denote the action of $m \in M$ on $p \in P$ as $pm$, rather than $p(m\phi)$. For $p \in P$, let $\widehat p$ be the constant function on $P$ which takes the value $p$, i.e. $q\widehat{p} = p$, $\forall q \in P$.
The \emph{closure} of a transformation monoid $(P, M)$, denoted by $\widehat{(P, M)}$, is defined as $(P, \widehat M)$, where $\widehat M$ is the monoid generated by $\displaystyle M \cup \bigcup_{p \in P}\{\widehat{p}\}$.

The \emph{skeleton space} $\mathscr{J}$ of a transformation monoid $(P, M)$ is \[\Big\{Pm\; \Big|\; m \in M\Big\}  \cup \displaystyle \bigcup_{p \in P}\Big\{\{p\}\Big\}\] with the preorder $\leq$ defined by: for $R, S \in \mathscr{J}$, $R \leq S$ if and only if $R\subseteq Sm$ for some $m \in M$. Consequently, we define an equivalence relation $\sim$ on $\mathscr{J}$ by \[R\sim S \;\;\mbox{if and only if}\;\; R \leq S\;\; \mbox{and}\;\; S \leq R.\]
We shall write $\mathscr{J}_i$ to denote the set of all elements of the skeleton space $\mathscr{J}$ of cardinality $i$ (for $i \geq 1$), that is,
\[\mathscr{J}_i = \Big\{T \in \mathscr{J}\;\Big|\; |T| = i \Big\}.\]
For $T \in \mathscr{J}$, put \[K(T) = \{m \in M\;|\; Tm = T\},\] the nonempty set of all elements of $M$ that act as permutations on the set $T$.
For $T \in \mathscr{J}$ with $|T|> 1$, the \emph{paving of $T$}, denoted by $B(T)$, is defined to be the set of maximal elements (with respect to set inclusion) of $\mathscr{J}$ that are contained in $T$, that is, \[B(T) = \{R \in \mathscr{J}\ |\ R \subsetneq T \mbox{ and if } S \in \mathscr{J} \mbox{ with } R \subseteq S \subseteq T \mbox{ then } S = R \mbox{ or } S = T\}.\]
Further, the set $G(T)$ of all the distinct permutations of $B(T)$ induced by the elements of $K(T)$ is called the \emph{holonomy group} of $T$ in $(P, M)$, and $(B(T), G(T))$ is a transformation group. It can be observed that, for $T_1, T_2 \in \mathscr{J}$ with $|T_1|, |T_2| > 1$, if $T_1 \sim T_2$, then $(B(T_1), G(T_1))$ is isomorphic to $(B(T_2), G(T_2))$.

The \emph{holonomy decomposition theorem} due to Eilenberg states that every finite transformation monoid divides a wreath product of its holonomy permutation-reset transformation monoids, as presented in the following:

\begin{theorem}[\cite{eilenberg76}]
If $(P, M)$ is a finite transformation monoid of height $n$, then
\[(P, M)  \prec \widehat{\mathscr{H}_1} \wr \widehat{\mathscr{H}_{2}} \wr \ldots \wr \widehat{\mathscr{H}_n},\]
where, for $1\leq i\leq n$, \[\mathscr{H}_i = \left(\prod_{j = 1}^{k_i} B(T_{ij}), \prod_{j = 1}^{k_i}G(T_{ij})\right),\] in which $k_i$ is the number of equivalence classes at height $i$ and $\{T_{ij}\ |\ 1\leq j \leq k_i\}$ is the set of representatives of equivalence classes at height $i$.
\end{theorem}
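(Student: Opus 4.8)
The plan is to prove the theorem by building an explicit coordinate system on $P$ indexed by the heights $1, \ldots, n$ and showing that the action of $M$ on these coordinates is realized by a cascade of the permutation-reset transformation monoids $\widehat{\mathscr{H}_i}$. The guiding idea is that every state of $P$ can be located by a descending chain in the skeleton space $\mathscr{J}$, and that an element $m \in M$ moves such a chain in a way that, level by level, is either a permutation of a paving (a holonomy-group element) or a collapse onto a strictly smaller element (a reset). First I would fix, at each height $i$ and for each equivalence class, a representative $T_{ij}$ together with, for every $T \sim T_{ij}$, a chosen pair of elements of $M$ realizing $T_{ij} \mapsto T$ and $T \mapsto T_{ij}$; these identification elements let me transport the local action on $B(T)$ back to the fixed transformation group $(B(T_{ij}), G(T_{ij}))$, so that the isomorphism $(B(T_1), G(T_1)) \cong (B(T_2), G(T_2))$ for $T_1 \sim T_2$ recorded after the definition is used systematically.

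Next I would define, for each $p \in P$, an address: a maximal descending chain $P = C_n \supsetneq C_{n-1} \supsetneq \cdots \supsetneq C_1 = \{p\}$ in $\mathscr{J}$ with $C_i$ at height $i$ and $C_{i-1} \in B(C_i)$. The state set of the candidate cascade is the product $\prod_i \prod_j B(T_{ij})$, and the covering map $\psi$ sends a consistent tuple of paving-coordinates to the unique state it isolates. I would then compute, for $m \in M$, how the address of $p$ relates to the address of $pm$: at each height the new paving-coordinate is obtained from the old one by applying $m$, reading the result through the identification elements, and recording either the induced permutation in $G(T_{ij})$ when $m$ acts bijectively at that level, or a constant map $\widehat{r}$ when $m$ collapses the level. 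This is precisely why the closures $\widehat{\mathscr{H}_i}$ — which adjoin all constant maps — appear in the statement rather than the bare $\mathscr{H}_i$.

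With the level-wise action in hand, I would verify the two structural facts that turn the product of holonomy monoids into the iterated wreath product $\widehat{\mathscr{H}_1} \wr \cdots \wr \widehat{\mathscr{H}_n}$. First, the dependency is triangular: the component of $m$ acting at a given height is determined by $m$ together with the coordinates at the strictly higher levels only, since those higher coordinates fix which paving and which identification elements are in force; this nesting is exactly the data encoded by a wreath product. Second, the assignment sending $m$ to its tuple of level actions is a homomorphism into the wreath product, and $\psi$ intertwines the $M$-action on $P$ with the wreath-product action on addresses. Establishing these two facts, together with surjectivity of $\psi$ onto $P$, yields the division $(P,M) \prec \widehat{\mathscr{H}_1} \wr \cdots \wr \widehat{\mathscr{H}_n}$.

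The main obstacle is the bookkeeping around the identification elements and the proof that collapses always land in the closure: one must check that the chosen elements realizing $T \mapsto T_{ij}$ and its inverse are mutually consistent across all levels, so that the transported actions genuinely assemble into a single homomorphism, and that whenever $m$ fails to permute a paving the resulting coordinate is a constant legitimately available in $\widehat{\mathscr{H}_i}$. A secondary but delicate point is handling the preorder-versus-equivalence distinction in $\mathscr{J}$ correctly, so that the heights are well-defined and the holonomy groups do not depend, up to isomorphism, on the choice of class representative — a fact asserted in the excerpt that I would invoke rather than reprove.
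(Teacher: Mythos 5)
The paper does not prove this theorem: it is quoted from Eilenberg \cite{eilenberg76} and used as a black box, so there is no internal proof to measure your proposal against. Your sketch is, in outline, the standard proof of the holonomy decomposition as given in Eilenberg and in D\"om\"osi--Nehaniv \cite{nehaniv05}: coordinatize $P$ by descending chains in the skeleton $\mathscr{J}$, fix identification elements between equivalent members of $\mathscr{J}$ so that local actions can be read in the fixed groups $G(T_{ij})$, observe that each level either permutes a paving or collapses (whence the closures $\widehat{\mathscr{H}_i}$ rather than the bare $\mathscr{H}_i$), and use the triangular dependency of levels to realize the action inside the iterated wreath product. The strategy is the right one.

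Two points keep the proposal from being a proof. First, your addresses are maximal chains $P = C_n \supsetneq \cdots \supsetneq C_1 = \{p\}$ ``with $C_i$ at height $i$''; in a general transformation monoid a member of $B(C_i)$ need not have height exactly one less than that of $C_i$ (heights can drop by more than one along a chain), so such chains need not exist and the coordinates at skipped levels must be handled separately, e.g.\ by dummy or constant entries. Second, the step you yourself flag as the main obstacle --- verifying that the level actions, transported through the chosen identification elements, assemble into a single homomorphism into the wreath product, and that every collapse is realized by a constant legitimately present in $\widehat{\mathscr{H}_i}$ --- is precisely the technical content of the theorem; deferring it leaves the argument a plan rather than a proof. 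Neither issue is fatal (both are resolved in the cited sources), but both would have to be carried out in full to claim the result.
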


\subsection{Automata}

This subsection is devoted for essential preliminaries on automata and monoids.  For more details one may refer \cite{berstel85,giam07,shubh12}.

Let $A$ be a finite set called an \emph{alphabet} with its elements as \emph{letters}. The free monoid over $A$ is denoted by $A^*$ whose elements are called words, and $\varepsilon$ denotes the empty word -- the identity element of $A^*$.

An \emph{automaton} $\au$ over an alphabet $A$ is a quadruple $\au = (Q, I, T, \mathcal{F})$, where $Q$ is a finite set called the set of \emph{states}, $I$ and $T$ are subsets of $Q$ called the sets of \emph{initial} and \emph{final} states, respectively, and $\mathcal{F}\subseteq Q\times A\times Q$ called the set of \emph{transitions}. Clearly, by denoting the states as vertices/nodes and the transitions as labeled arcs, an automaton can be represented by a digraph in which initial and final states shall be distinguished appropriately. A \emph{path} in $\au$ is a finite sequence of consecutive arcs in its digraph. For $p_i \in Q$ ($0\le i \le k$) and $a_j \in A$ ($1 \le j \le k$), let
\[p_0 \xrightarrow{a_1} p_1 \xrightarrow{a_2} p_2 \xrightarrow{a_3} \cdots \xrightarrow{a_{k-1}} p_{k-1} \xrightarrow{a_k} p_k\] be a path, say $P$, in $\au$. The word $a_1\cdots a_k \in A^*$ is the \emph{label of the path} $P$. A \emph{null path} is a path from a state to itself labeled by $\varepsilon$. A path that starts and ends at the same state is called as a \emph{cycle}, if it is not a null path.

In an automaton $\au$ over $A$, a state $q$ is called a \emph{branch point going in}, in short \emph{bpi}, if the number of transitions coming into $q$ (i.e. the indegree of $q$ -- the number of arcs coming into $q$ -- in the digraph of $\au$) is at least two. We write $BPI(\au)$ to denote the set of all bpis of $\au$. A state $q$ of $\au$ is \emph{accessible} (respectively, \emph{coaccessible}) if there is a path from an initial state to $q$ (respectively, a path from $q$ to a final state). An automaton is said to be \emph{trim} if all the states of the automaton are accessible and coaccessible.

An automaton is called a \emph{semi-flower automaton} (in short, SFA) if it is a trim automaton with a unique initial state that is equal to a unique final state such that all the cycles visit the unique initial-final state. If an automaton $\au = (Q, I, T, \mathcal{F})$ is an SFA, we denote the initial-final state by $q_0$. In which case, we simply write $\au = (Q, q_0, q_0, \mathcal{F})$.

An automaton is \emph{deterministic} if it has a unique initial state and there is at most one transition defined for a state and an input letter. An automaton is \emph{complete} if there is at least one transition defined for a state and an input letter.

Let $\au = (Q, q_0, T, \mathcal{F})$ be a complete and deterministic automaton over $A$. Since there is a unique transition for each pair of a state and an input letter,  we define a function $\delta : Q \times A \longrightarrow Q$ by \[\delta(p, a) = q \;\;\mbox{if and only if}\;\; (p,a,q) \in \mathcal{F}.\] We can inductively extend the function for words by, for all $u \in A^*, a \in A$ and $q \in Q$, \[\delta(q,\varepsilon) = q,\; \mbox{ and }\; \delta(q, au) = \delta(\delta(q,a), u).\] We write $qu$ instead of $\delta(q, u)$. There is a natural way to associate a finite monoid to a complete and deterministic automaton $\au$. For each $x \in A^*$, we define a function $\overline{x} : Q \longrightarrow Q$ by $q \overline{x} = q x$, for all $q \in Q$.
The set of functions, $M(\au) = \{\overline{x} \;|\;x \in A^*\}$, forms a monoid under the composition of functions, called the \emph{monoid} of $\au$. Clearly, the monoid $M(\au)$ is generated by the functions defined by the letters of $A$. Further, for all $x, y \in A^*$, we have $\overline{xy} = \overline{x} \;\overline{y}$ and $\overline{\varepsilon}$ is the identity function on $Q$.

Let $X = \{p_1, \ldots, p_r\}$ be a finite set and $Y \subseteq X$. A \emph{$Y$\!\!-cycle}  is a permutation $f_Y$ on $X$ such that $f_Y$ induces a cyclic ordering on $Y$ (= $\{p_{i_1}, \ldots, p_{i_s}\}$, say) and $f_Y$ is identity on $X \setminus Y$, i.e., for $1 \le j < s$ and $p \in X \setminus Y$, \[p_{i_j}f_Y = p_{i_{j+1}}, \; p_{i_s}f_Y = p_{i_1},\; \mbox{ and }\; pf_Y = p.\] A \emph{circular permutation} on $X$ is an $X$\!-cycle. It is well known that for every permutation $f$ on $X$, there exists a partition $\{Y_i\}_{i \in \{1, 2, \ldots, t\}}$ of $X$ such that $f = f_{Y_1}f_{Y_2}\cdots f_{Y_t}$, a composition of (disjoint) $Y_i$-cycles.

A complete and deterministic automaton $\au$ over $A$ is said to be a \emph{circular automaton} if there exists $a \in A$ such that $\overline{a}$ is a circular permutation. Circular automata have been studied in various contexts. Pin proved the \v{C}ern\'{y} conjecture for circular automata with a prime number of states \cite{pin78}. Dubuc showed that the \v{C}ern\'{y} conjecture is true for circular automata \cite{dubuc98}.

In order to investigate the holonomy decomposition of circular semi-flower automata, in this paper we consider these automata classified by their number of bpis and complete the task for the automata with at most two bpis.

\section{Main Results}

We present our results of the paper in three subsections. In Subsection 2.1, we obtain some properties of circular semi-flower automata (CSFA) which are useful in the present work. Then, we investigate the holonomy decomposition of CSFA with at most one bpi and two bpis in subsections 2.2 and 2.3, respectively.

In what follows, $\au = (Q, q_0, q_0, \mathcal{F})$ is a complete and deterministic automaton over an alphabet $A$ such that $|Q| = n$. Further, for $m\geq 1$, $\mathscr{C}_m$ denotes a transformation group $(X, C_m)$, for some set $X$ with $|X| = m$ and  $C_m$ is the cyclic group generated by a circular permutation on $X$.

\subsection{Circular SFA}

In this subsection, first we ascertain that there is a unique circular permutation induced by the input symbols of CSFA and then we proceed to obtain certain properties pertaining to the bpis of CSFA.

\begin{proposition}\label{c3.l.ucp}
Let $\au$ be an SFA over $A$ and $a, b \in A$.
\begin{enumerate}
\item[(i)] If $\overline{a}$ is a permutation on $Q$, then $\overline{a}$ is a circular permutation on $Q$.
\item[(ii)] If $\overline{a}$ and $\overline{b}$ are permutations on $Q$, then $\overline{a} = \overline{b}$.
\end{enumerate}
\end{proposition}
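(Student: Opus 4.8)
The plan is to exploit the defining property of an SFA -- that every cycle in the digraph passes through $q_0$ -- which is equivalent to the statement that deleting $q_0$ together with all arcs incident to it leaves an acyclic digraph on $V := Q \setminus \{q_0\}$.

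For (i), I would begin from the fact that the permutation $\overline{a}$ of the finite set $Q$ decomposes into disjoint orbits. Because $a$ is a single letter, the orbit of a state $q$, namely $\{q, qa, qa^2, \dots\}$, is precisely the set of states visited by the closed $a$-labelled path $q \xrightarrow{a} qa \xrightarrow{a} \cdots \xrightarrow{a} q$; since this path has label $a^k$ with $k \geq 1$ it is not a null path, hence a genuine cycle in $\au$, and so by the SFA property it must visit $q_0$. Therefore every orbit of $\overline{a}$ contains $q_0$. As distinct orbits are disjoint, there can be only one orbit, which is forced to be all of $Q$; this says exactly that $\overline{a}$ is a single $Q$-cycle, i.e. a circular permutation. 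I expect this part to be routine, the only subtlety being that for a single letter (unlike an arbitrary word) the orbit coincides with the set of states visited by the associated cycle, which is what lets the SFA hypothesis act on the orbit directly.

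For (ii), by (i) both $\overline{a}$ and $\overline{b}$ are circular permutations, so the $a$-arcs form a Hamiltonian cycle on $Q$ through $q_0$, and likewise the $b$-arcs. Deleting $q_0$ turns each Hamiltonian cycle into a Hamiltonian \emph{path} on $V$: the $a$-arcs restrict to $P_a : q_0a \to q_0a^2 \to \cdots \to q_0a^{n-1}$ and the $b$-arcs to an analogous path $P_b$. Each such path remembers its two endpoints, the successor and predecessor of $q_0$, and so determines the entire cycle; hence $\overline{a} \neq \overline{b}$ forces $P_a \neq P_b$. Since $P_a \cup P_b$ is a subgraph of $\au \setminus \{q_0\}$, which is acyclic, a contradiction should follow from distinctness of the two paths.

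The crux -- and the step I expect to be the main obstacle -- is the purely combinatorial claim that the union of two \emph{distinct} directed Hamiltonian paths on the same vertex set must contain a directed cycle. I would prove this by contradiction: if $P_a \cup P_b$ were acyclic it would admit a topological order $\pi$ of $V$, and a Hamiltonian path all of whose arcs increase along $\pi$ is forced to visit the vertices in exactly the order $\pi$; hence both $P_a$ and $P_b$ would equal the unique $\pi$-increasing Hamiltonian path, contradicting $P_a \neq P_b$. Consequently $P_a \cup P_b$, and therefore $\au \setminus \{q_0\}$, contains a cycle avoiding $q_0$, contradicting the SFA hypothesis. This forces $\overline{a} = \overline{b}$.
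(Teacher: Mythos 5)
Your proof is correct. Part (i) is essentially the paper's argument: both observe that each orbit of the permutation $\overline{a}$ traces out a cycle in the digraph labelled by a power of $a$, so every orbit must contain $q_0$, and disjointness of orbits forces a single orbit equal to $Q$. Part (ii) takes a genuinely different route. The paper argues constructively: writing the two cyclic orderings starting at $q_0$ and taking the first position $k$ where they disagree, it exhibits an explicit cycle $q_{i_k}\xrightarrow{a^{r-k}}q_{j_k}\xrightarrow{b^{s-k}}q_{i_k}$ and checks that all intermediate states sit strictly after position $k$ in their respective orderings, hence avoid $q_0$. You instead reformulate the SFA condition as acyclicity of $\au$ with $q_0$ deleted, note that the $a$- and $b$-arcs restrict to Hamiltonian paths $P_a, P_b$ on $Q\setminus\{q_0\}$ which determine $\overline a,\overline b$, and invoke the fact that an acyclic digraph admits at most one Hamiltonian path (via uniqueness of the topologically increasing enumeration). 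Your argument is nonconstructive but cleaner: it replaces the paper's index-chasing (in particular the claims $s>k$ and $r>k$, which the paper asserts without justification) with a single transparent combinatorial lemma, and it isolates the structural reason the result holds. The paper's version has the minor advantage of producing the offending cycle and its label explicitly, which is in the spirit of the later arguments in Section 2. Both proofs are complete and correct.
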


\begin{proof}\
\begin{enumerate}
\item[(i)] Write $\overline{a} = f_{Q_1}\cdots f_{Q_t}$, a composition of $Q_i$-cycles for some partition $\{Q_i\}_{i \in \{1, \ldots, t\}}$ of $Q$. Let $q_0 \in Q_r$, for some $r$. If $Q_r = Q$, then $t = r = 1$ so that $\overline{a}$ is a circular permutation. Otherwise, there exist $q \in Q \setminus Q_r$ and $s \in \{1, \ldots, t\}$ such that $q \in Q_{s}$. Note that the $Q_{s}$-cycle induces a cycle in the digraph of $\au$ which does not pass through the state $q_0$; a contradiction.
\item[(ii)] On the contrary, let us assume that $\overline a \ne \overline b$. From part (i), the permutations $\overline a$ and $\overline b$ are circular permutations on $Q$. Let cyclic orderings on $Q$ with respect to $\overline a$ and $\overline b$ be as shown below.
    \begin{align*}
\overline a & : q_0, q_{i_1}, q_{i_2},\ldots, q_{i_{n-1}}\\
\overline b & : q_0, q_{j_1}, q_{j_2},\ldots, q_{j_{n-1}}
\end{align*}
Since $\overline a \neq \overline b$, let $k$ be the least number such that $q_{i_k} \neq q_{j_k}$. Note that there exists $s>k$ such that $q_{i_k} = q_{j_s}$ and also there exists $r>k$ such that $q_{j_k} = q_{i_r}$. Now, the path
\begin{equation*}
q_{i_k}\xrightarrow[]{a^{r-k}}q_{i_r} = q_{j_k}\xrightarrow[\hspace{7pt}\hspace{7pt}]{b^{s-k}} q_{j_s} = q_{i_k}
\end{equation*}
is a cycle labeled by $a^{r-k} b^{s-k}$. Clearly, this cycle does not pass through the initial-final state $q_0$; a contradiction.
\end{enumerate}
\end{proof}

\begin{corollary}
If $\au$ is a CSFA, then there is a unique circular permutation induced by the input symbols of $\au$.
\end{corollary}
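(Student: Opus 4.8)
The plan is to read off the corollary directly from Proposition~\ref{c3.l.ucp}, since the substantive content has already been established there. By hypothesis $\au$ is a CSFA, hence in particular a circular automaton; by the definition of circular automaton, there exists at least one input symbol $a \in A$ for which $\overline{a}$ is a circular permutation on $Q$. This settles the existence of a circular permutation induced by the input symbols.

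For uniqueness, I would take any two input symbols $a, b \in A$ whose induced maps $\overline{a}$ and $\overline{b}$ are circular permutations on $Q$. Since every circular permutation is in particular a permutation, both $\overline{a}$ and $\overline{b}$ are permutations on $Q$. Moreover $\au$, being a CSFA, is an SFA, so Proposition~\ref{c3.l.ucp}(ii) applies and forces $\overline{a} = \overline{b}$. Thus any two input symbols that induce circular permutations must in fact induce the \emph{same} one.

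Combining existence and uniqueness, there is exactly one circular permutation arising as $\overline{a}$ for some $a \in A$, which is the desired conclusion. I do not anticipate any real obstacle here: the genuine combinatorial work -- that a permutation-inducing letter must act as a single cycle through $q_0$, and that two such letters cannot disagree without producing a cycle avoiding $q_0$ -- was already carried out in the proof of Proposition~\ref{c3.l.ucp}. The only point worth articulating carefully is the logical reading of ``unique'': the assertion is not that a single distinguished letter induces a circular permutation, but rather that \emph{whenever} a letter induces a circular permutation, the resulting permutation is independent of which such letter is chosen.
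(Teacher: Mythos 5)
Your argument is correct and matches the paper's intended reasoning: the paper states the corollary without proof as an immediate consequence of Proposition~\ref{c3.l.ucp}, with existence coming from the definition of a circular automaton and uniqueness from part (ii) applied to any two permutation-inducing letters. Your careful reading of ``unique'' is exactly the right one.
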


\begin{proposition}\label{c3.l.nobpi}
Let $\au$ be an SFA over $A$; then, \[BPI(\au) = \varnothing \Longleftrightarrow |A| = 1.\]
\end{proposition}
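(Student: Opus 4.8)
The plan is to prove both implications by a degree-counting argument on the digraph of $\au$, exploiting that $\au$ is complete and deterministic (so that the out-degree of every state equals $|A|$) together with the trim and semi-flower structure. Throughout I would use the handshaking fact that the sum of the in-degrees equals the sum of the out-degrees equals the total number of arcs.

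For the implication $|A| = 1 \Longrightarrow BPI(\au) = \varnothing$, write $A = \{a\}$. Completeness and determinism make $\overline a$ a total function on $Q$, so every state has out-degree exactly $1$ and the digraph has exactly $n$ arcs. First I would show that every state has in-degree at least $1$. For $q \neq q_0$, accessibility (from trimness) supplies a nonempty path from $q_0$ to $q$, whose last arc enters $q$. For $q_0$ itself, any forward orbit of $\overline a$ on the finite set $Q$ must eventually enter a cycle in the digraph, and by the semi-flower condition that cycle passes through $q_0$; hence $q_0$ has a predecessor. Since the in-degrees are nonnegative integers that sum to $n$, are $\geq 1$ each, and there are $n$ states, every in-degree must equal exactly $1$. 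Thus no state has in-degree $\geq 2$, i.e. $BPI(\au) = \varnothing$.

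For the converse I would argue the contrapositive: suppose $|A| \geq 2$. Then each state has out-degree $|A| \geq 2$, so the total number of arcs is $n|A|$, which equals the sum of the in-degrees. Were there no bpi, every in-degree would be $\leq 1$, giving a total of at most $n < n|A|$, a contradiction; hence some state has in-degree $\geq 2$ and $BPI(\au) \neq \varnothing$. Note that this direction uses only completeness and determinism, not the full SFA structure.

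The only genuinely delicate point is verifying, in the forward direction, that $q_0$ itself has in-degree at least one; this is precisely where the defining property of a semi-flower automaton, that every cycle visits $q_0$, is needed, since it forces the unavoidable cycle in the functional graph of $\overline a$ to be incident to $q_0$. Once this is in place, the rest is elementary counting, and I would not need to invoke Proposition~\ref{c3.l.ucp}.
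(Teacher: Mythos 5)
Your proof is correct and follows essentially the same route as the paper: count the total in-degree ($n|A|$, the number of transitions), note every state has in-degree at least one, and compare with $n$. You are in fact slightly more careful than the paper at the one delicate spot — the paper justifies ``indegree of each state is at least one'' by accessibility alone, which does not immediately cover $q_0$, whereas you correctly invoke the semi-flower cycle condition for that state.
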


\begin{proof}
In an $n$-state complete and deterministic automaton over $A$, \[\text{the total indegree of all states = the total number of transitions} = n|A|.\]
Since $\au$ is accessible, indegree of each state is at least one. Consequently,
\[BPI(\au) = \varnothing \Longleftrightarrow \text{the total indegree of all states }= n \Longleftrightarrow |A| = 1.\]
\end{proof}

Hereafter $\au$ is further assumed to be a CSFA. For the rest of the paper, we fix the following regarding $\au$. Assume $a \in A$ induces a circular permutation $\overline{a}$ on the state set $Q$ of $\au$. Accordingly, \[\overline a: q_0, q_1, \ldots, q_{n-1}\] is the cyclic ordering on $Q$ with respect to $\overline{a}$.

\begin{proposition}\label{c3.l.q0isbpi}
If $\au$ has at least one bpi, then its initial-final state is always a bpi.
\end{proposition}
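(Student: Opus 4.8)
The plan is to argue by contradiction: I assume that $q_0$ is \emph{not} a bpi and derive a contradiction with the semi-flower structure of $\au$. The guiding idea is that, in any SFA, deleting the initial-final state destroys every cycle, so the digraph that remains is acyclic; a finite nonempty acyclic digraph must contain a sink, and I will show that the no-bpi assumption forces every remaining state to retain an outgoing arc, which is impossible.

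First I would record the consequences of assuming that $q_0$ is not a bpi. Since $\overline a: q_0, q_1, \ldots, q_{n-1}$ is a circular permutation, the arc $q_{n-1}\xrightarrow{a} q_0$ is always an incoming arc of $q_0$; hence, if $q_0$ has indegree one, this is its \emph{only} incoming arc. In particular, no letter $c \in A$ other than $a$ satisfies $q_{n-1}\overline c = q_0$, and no state $q_i$ with $i \ne n-1$ has an arc into $q_0$ (an $a$-arc would force $i = n-1$, while a non-$a$ arc would be a second, distinctly labelled, incoming arc of $q_0$). Also, because $\au$ is assumed to have a bpi, Proposition~\ref{c3.l.nobpi} gives $|A| \ge 2$; and I may take $n \ge 2$, the case $n = 1$ being immediate since then $q_0$ is the only state.

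Next I would pass to the subdigraph $D$ induced on $Q \setminus \{q_0\}$. Because $\au$ is an SFA, every cycle passes through $q_0$, so $D$ contains no directed cycle and is acyclic; being finite and nonempty (as $n \ge 2$), it therefore has a sink. The core step is to check that \emph{no} vertex of $D$ can be a sink. For a state $q_i$ with $i \ne n-1$, none of its $|A| \ge 2$ outgoing arcs is deleted in passing to $D$, since by the previous paragraph none of them targets $q_0$; for $q_{n-1}$, exactly one outgoing arc (the $a$-arc into $q_0$) is deleted, leaving $|A| - 1 \ge 1$ arcs. Thus every vertex of $D$ retains at least one outgoing arc, contradicting the existence of a sink, and so $q_0$ must be a bpi.

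I expect the main obstacle to be the careful bookkeeping of the middle step, namely arguing that the sole incoming arc of $q_0$ permits the deletion of at most one outgoing arc from the whole of $D$, located precisely at $q_{n-1}$. The key leverage points are the completeness of $\au$ (guaranteeing each state has $|A|$ outgoing arcs) together with $|A| \ge 2$ supplied by Proposition~\ref{c3.l.nobpi}; once these are in place, the acyclicity coming from the semi-flower property closes the argument.
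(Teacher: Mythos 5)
Your argument is correct, but it takes a genuinely different route from the paper's. The paper works locally and constructively: it claims $q_{n-1}\overline{b} = q_0$ for every $b \in A$, and if some $c\in A$ instead sends $q_{n-1}$ to $q_i$ with $i \ge 1$, it exhibits the explicit cycle $q_i \xrightarrow{a^{n-i-1}} q_{n-1} \xrightarrow{c} q_i$ avoiding $q_0$, contradicting the semi-flower condition; together with $|A|\ge 2$ from Proposition~\ref{c3.l.nobpi} this gives $q_0$ indegree at least two. You work globally: deleting $q_0$ leaves an acyclic digraph by the semi-flower condition, hence one with a sink, while the assumption that $q_0$ has indegree one (its sole incoming arc being $q_{n-1}\xrightarrow{a}q_0$) forces every surviving vertex to keep an outgoing arc --- the contradiction. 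Both proofs rest on the same two inputs, namely completeness and determinism giving each state exactly $|A|\ge 2$ outgoing arcs, and the circularity of $\overline a$ pinning down the one guaranteed arc into $q_0$; the paper's version has the small advantage of yielding the sharper, reusable fact $q_{n-1}\overline b = q_0$ for all $b\in A$, whereas yours trades the explicit cycle for a standard graph-theoretic existence argument. Your bookkeeping of the edge cases ($n=1$, and which outgoing arcs survive at $q_{n-1}$ versus at the other states) is accurate.
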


\begin{proof}
Since $\au$ has at least one bpi, by Proposition \ref{c3.l.nobpi}, we have $|A| \ge 2$. We claim that $q_{n-1}\overline{b} = q_0$, for all $b \in A$, so that $q_0$ is a bpi. Let us assume the contrary, i.e. $q_{n-1}\overline{c} \ne q_0$, for some $c \in A$. Since $\au$ is complete and deterministic, $q_{n-1}\overline{c} = q_i$, for some $i$ (with $1 \le i < n$). Note that $q_i\overline{a^{n-i-1}c} = q_i$. Thus, we have a cycle in $\au$ from $q_i$ to $q_i$ labeled by $a^{n-i-1}c$ that does not visit $q_0$. This is a contradiction. Hence, $q_{n-1}\overline{b} = q_0$, for all $b \in A$.
\end{proof}

\begin{proposition} \label{c3.l.atmostbpi}
For $1 \leq m < n$, if $|BPI(\au)| = m$, then any non-permutation in $M(\au)$ has rank at most $m$.
\end{proposition}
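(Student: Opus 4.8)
The plan is to reduce the statement to a bound on the ranks of the \emph{individual} non-permutation letters, and then to exploit the fact that rank cannot increase under composition. Concretely, if $\overline{w}$ is a non-permutation in $M(\au)$, then the word $w$ cannot consist solely of permutation-inducing letters, since a product of permutations is again a permutation; hence $w$ spells out at least one letter $c$ with $\overline{c}$ a non-permutation. For any $x, y \in A^*$ we have $Q\,\overline{x}\,\overline{c}\,\overline{y} = ((Q\overline{x})\overline{c})\overline{y} \subseteq (Q\overline{c})\overline{y}$, and applying a function never enlarges a set, so $\mathrm{rank}(\overline{w}) \le \mathrm{rank}(\overline{c})$. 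Thus it suffices to prove that $\mathrm{rank}(\overline{c}) \le m$ for every letter $c$ inducing a non-permutation.

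The heart of the argument is to identify $BPI(\au)$ explicitly. First I would observe that, under the hypothesis $m < n$, at most one letter can induce a permutation: if two distinct letters $x \neq y$ both induced permutations on $Q$, then each state $q$ would receive an $x$-labelled arc from the unique $\overline{x}$-preimage of $q$ and a $y$-labelled arc from the unique $\overline{y}$-preimage of $q$, and these are distinct arcs since $x \neq y$; this forces indegree at least two at every state, i.e. $|BPI(\au)| = n$, contrary to $m < n$. Since $\overline{a}$ is a permutation it is therefore the only one, and every other letter $c$ induces a non-permutation; write $N = A \setminus \{a\}$ for this set, which is nonempty because $m \ge 1$ (by Proposition \ref{c3.l.nobpi}). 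As $\overline{a}$ is a bijection, each state has exactly one incoming $a$-labelled arc, so the indegree of a state $q$ equals $1 + \sum_{c \in N} |\{p \in Q : p\overline{c} = q\}|$. Hence $q$ is a bpi precisely when it lies in the image of some non-permutation letter, which gives $BPI(\au) = \bigcup_{c \in N} Q\overline{c}$.

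Consequently $Q\overline{c} \subseteq BPI(\au)$ for each $c \in N$, so $\mathrm{rank}(\overline{c}) = |Q\overline{c}| \le |BPI(\au)| = m$, and combining this with the reduction of the first paragraph yields $\mathrm{rank}(\overline{w}) \le m$ for every non-permutation $\overline{w} \in M(\au)$. The step I expect to carry the whole proof — and the one to get exactly right — is the indegree bookkeeping that pins down $BPI(\au)$ as the union of the images of the non-permutation letters; in particular, the observation that $m < n$ forces a single permutation letter (so that the $a$-contribution to every indegree is exactly one) is what makes ``being a bpi'' equivalent to ``being hit by a non-permutation letter.'' By comparison, the submultiplicativity of rank and the fact that a non-permutation element must be spelled using a non-permutation letter are routine.
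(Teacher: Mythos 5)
Your proof is correct and follows essentially the same route as the paper's: both arguments rest on the observation that $\overline{a}$ contributes exactly one to the indegree of every state, so the image of any other letter must sit inside $BPI(\au)$ (the paper phrases this as: if $|Q\overline{b}|>m$ for some $b\ne a$ then $|BPI(\au)|>m$), and both then finish by noting that a non-permutation word must contain such a letter and that rank does not increase under composition. Your explicit identification $BPI(\au)=\bigcup_{c\in A\setminus\{a\}}Q\overline{c}$ is a slightly stronger statement than the paper needs, but it is the same indegree bookkeeping.
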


\begin{proof}
In view of Proposition \ref{c3.l.nobpi}, we have $|A| > 1$. It is clear that $\overline a$ contributes one to the indegree of each state of $\au$. For $b \in A \setminus \{a\}$, if $|Q\overline{b}| > m$, then $|BPI(\au)| > m$; a contradiction.  Thus, $|Q\overline{b}| \le m$ for all $b \in A \setminus \{a\}$.   Now, for $x \in A^*$, if $\overline{x}$ is a non-permutation, then $x$ contains a symbol $b \in A \setminus \{a\}$. Hence, the rank of $\overline{x}$ is at most $m$.
\end{proof}

In view of Proposition \ref{c3.l.q0isbpi}, we have the following corollary of Proposition \ref{c3.l.atmostbpi}.

\begin{corollary}\label{c3.c.Qbq0}
If $\au$ has a unique bpi, then $Q\overline b = \{q_0\}$, for all $b \in A\setminus \{a\}$.
\end{corollary}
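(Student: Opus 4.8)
The plan is to combine the rank bound of Proposition~\ref{c3.l.atmostbpi} with the identification of the unique bpi furnished by Proposition~\ref{c3.l.q0isbpi}. Since $\au$ has a unique bpi I take $m = 1$; by Proposition~\ref{c3.l.q0isbpi} this bpi is the initial-final state $q_0$, and by Proposition~\ref{c3.l.nobpi} we have $|A| \ge 2$, so that $A \setminus \{a\}$ is nonempty and fixing $b \in A \setminus \{a\}$ is meaningful. The case $n = 1$ is trivial, since then $Q = \{q_0\}$ forces $Q\overline{b} = \{q_0\}$; so I assume $n \ge 2$ and hence $1 \le m < n$, which is what Proposition~\ref{c3.l.atmostbpi} requires.

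First I would check that each such $\overline{b}$ is a non-permutation, so that Proposition~\ref{c3.l.atmostbpi} genuinely applies to it. If $\overline{b}$ were a permutation, then since $\overline{a}$ is a circular permutation, Proposition~\ref{c3.l.ucp}(ii) would force $\overline{b} = \overline{a}$; but then, as $\overline{a}$ is onto, every state would receive both an $a$-labelled and a $b$-labelled incoming arc from a common predecessor and would therefore be a bpi, contradicting $|BPI(\au)| = 1$. Hence $\overline{b}$ is a non-permutation.

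Proposition~\ref{c3.l.atmostbpi} with $m = 1$ then gives $\mbox{rank}(\overline{b}) \le 1$, and since $\overline{b}$ is a function on the nonempty set $Q$ its rank is exactly $1$; that is, $\overline{b}$ is constant, so $Q\overline{b} = \{p\}$ for a single state $p$. Finally, the proof of Proposition~\ref{c3.l.q0isbpi} establishes $q_{n-1}\overline{b} = q_0$ for every $b \in A$; comparing this with $Q\overline{b} = \{p\}$ yields $p = q_0$, whence $Q\overline{b} = \{q_0\}$, as required.

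I expect the only real obstacle to lie in the middle step, namely confirming that $\overline{b}$ is not a permutation before invoking Proposition~\ref{c3.l.atmostbpi}; the remaining deductions are immediate. One can sidestep this point by reusing the indegree count inside the proof of Proposition~\ref{c3.l.atmostbpi}, which already shows $|Q\overline{b}| \le |BPI(\au)| = 1$ for every $b \in A \setminus \{a\}$ without first classifying $\overline{b}$, after which the identification $p = q_0$ proceeds identically.
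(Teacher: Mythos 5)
Your proposal is correct and follows essentially the same route the paper intends for this corollary: apply Proposition~\ref{c3.l.atmostbpi} with $m=1$ to get $|Q\overline b|=1$ and use Proposition~\ref{c3.l.q0isbpi} to identify the image with $\{q_0\}$. Your extra care in verifying that $\overline b$ is not a permutation (so that Proposition~\ref{c3.l.atmostbpi} applies), and your observation that one can instead quote the indegree count $|Q\overline b|\le m$ from inside that proposition's proof, only make explicit details the paper leaves implicit.
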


\subsection{CSFA with at most one bpi}

In this subsection, we obtain the holonomy decomposition of CSFA with at most one bpi. We first observe that the holonomy decomposition of SFA with no bpis follows from the general case of permutation SFA. An automaton is a \emph{permutation automaton} if the function induced by each input symbol is a permutation on the state set \cite{thi68}. Clearly, an automaton is a permutation automaton if and only if its monoid is a group.

By Proposition \ref{c3.l.ucp}, we have the following proposition which also provides the holonomy decomposition of a permutation SFA.

\begin{proposition}\label{c3.p.hdpa}
If $\au$ is a permutation SFA, then $M(\au)$ is a cyclic group. \break Further, \[(Q, M(\au)) \prec \widehat{\mathscr{C}_n}.\]
\end{proposition}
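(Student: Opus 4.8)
The plan is to treat the two assertions separately, using Proposition \ref{c3.l.ucp} for the first and a direct identification with the standing notation for the second. First I would argue that $M(\au)$ is cyclic. Since $\au$ is a permutation SFA, the function $\overline{a}$ induced by every letter $a \in A$ is a permutation on $Q$; by Proposition \ref{c3.l.ucp}(i) each such $\overline{a}$ is in fact a circular permutation, and by Proposition \ref{c3.l.ucp}(ii) any two of them coincide. Hence the generators of $M(\au)$ collapse to a single circular permutation $\sigma$, so that $M(\au) = \langle \sigma \rangle$. Because a circular permutation on the $n$-element set $Q$ has order exactly $n$, this shows $M(\au)$ is the cyclic group of order $n$ generated by $\sigma$.

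Next I would verify the division. The key observation is that, by the previous step, $(Q, M(\au))$ is precisely a transformation group whose acting monoid is the cyclic group generated by a circular permutation on a set of size $n$; that is, up to relabelling the states it is exactly the group $\mathscr{C}_n$ of the standing notation. I would make this explicit by choosing a bijection $X \longrightarrow Q$ carrying the cyclic ordering on $X$ determined by the generator of $C_n$ to the cyclic ordering $q_0, q_1, \ldots, q_{n-1}$ on $Q$. This bijection intertwines the two circular actions and yields an isomorphism $(Q, M(\au)) \cong \mathscr{C}_n$ of transformation groups. Finally, since $C_n \subseteq \widehat{C_n}$ by the definition of closure, the identity on $X$ witnesses $\mathscr{C}_n \prec \widehat{\mathscr{C}_n}$, as any transformation monoid divides its own closure. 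Composing with the isomorphism of the previous sentence gives the chain \[(Q, M(\au)) \cong \mathscr{C}_n \prec \widehat{\mathscr{C}_n},\] which is the desired conclusion.

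I do not anticipate a genuine obstacle in this argument; the result is essentially a direct consequence of Proposition \ref{c3.l.ucp} together with the fact that a circular permutation on $n$ points generates a cyclic group of order $n$. The only point demanding care is the bookkeeping around the definition of division $\prec$: one must confirm that the chosen bijection really does intertwine the cyclic actions and that passing to the closure merely enlarges the acting monoid, so that the displayed chain is legitimate. As an alternative, one could instead specialise the holonomy decomposition theorem, noting that a permutation SFA has a single skeleton element of cardinality greater than one, namely $Q$ itself, whose paving consists of the $n$ singletons and whose holonomy group is the cyclic group $C_n$ acting on them; this again produces $\widehat{\mathscr{C}_n}$. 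The self-contained identification above, however, seems both shorter and more transparent.
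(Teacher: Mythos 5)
Your argument is correct and follows essentially the same route as the paper, which states this proposition without a separate proof beyond the remark that it follows from Proposition~\ref{c3.l.ucp}: all letters induce one and the same circular permutation, so $M(\au)$ is cyclic of order $n$ and $(Q, M(\au))$ is isomorphic to $\mathscr{C}_n$, which divides its own closure. Your elaboration of the bookkeeping for $\prec$ is sound and merely makes explicit what the paper leaves implicit.
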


Now, we investigate the holonomy decomposition of CSFA with no bpis. If $\au$ is an SFA with no bpis,
then $|A| = 1$, say $A = \{a\}$ (cf. Proposition \ref{c3.l.nobpi}). Note that the function $\overline a$ is a circular permutation on $Q$. Thus, $\au$ is a circular as well as permutation SFA. Hence, by Proposition \ref{c3.p.hdpa}, we have the following theorem.

\begin{theorem}
Let $\au$ be an SFA with no bpis, then \[(Q, M(\au)) \prec \widehat{\mathscr{C}_n}.\]
\end{theorem}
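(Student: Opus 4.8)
The plan is to recognize that an SFA with no bpis is simply a permutation SFA in disguise, so that the stated division is an immediate instance of Proposition \ref{c3.p.hdpa}. First I would invoke Proposition \ref{c3.l.nobpi}: since $BPI(\au) = \varnothing$, the alphabet must be a singleton, so write $A = \{a\}$.

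Next I would establish that the single generator $\overline{a}$ is a permutation of $Q$. Here I would argue by counting indegrees, exactly as in the proof of Proposition \ref{c3.l.nobpi}: in an $n$-state complete and deterministic automaton over the one-letter alphabet $\{a\}$, the total indegree of all states equals the number of transitions, namely $n$. Since $\au$ is trim and hence accessible, every state has indegree at least $1$; together with a total indegree of $n$, this forces every state to have indegree exactly $1$. Thus $\overline{a}$ maps onto $Q$ and is injective, i.e. it is a bijection, hence a permutation of $Q$.

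I would then apply Proposition \ref{c3.l.ucp}(i), which upgrades this permutation to a \emph{circular} permutation on $Q$. Consequently $\au$ is simultaneously a permutation SFA (its monoid $M(\au)$ is the group generated by $\overline{a}$) and a circular SFA. Finally I would invoke Proposition \ref{c3.p.hdpa} for this permutation SFA: $M(\au)$ is the cyclic group generated by the circular permutation $\overline{a}$, and the division $(Q, M(\au)) \prec \widehat{\mathscr{C}_n}$ follows at once.

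There is no genuine obstacle here; the only substantive step is the observation that the no-bpi hypothesis, via the indegree count, collapses the lone generator to a circular permutation. Once that is seen, the theorem reduces entirely to the already-proved Proposition \ref{c3.p.hdpa}, leaving no further computation to carry out.
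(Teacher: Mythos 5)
Your proposal is correct and follows essentially the same route as the paper: reduce to $|A|=1$ via Proposition \ref{c3.l.nobpi}, observe that $\overline{a}$ is a circular permutation, and conclude by Proposition \ref{c3.p.hdpa}. The only difference is that you spell out the indegree-counting argument showing $\overline{a}$ is a bijection, a step the paper leaves implicit.
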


Now, we present the holonomy decomposition of CSFA with a unique bpi in the following theorem.

\begin{theorem}
If $\au$ is a CSFA with a unique bpi, then \[(Q, M(\au)) \prec \widehat{\mathscr{C}_n}.\]
\end{theorem}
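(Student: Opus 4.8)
The plan is to pin down $M(\au)$ explicitly and recognize it as the closure of the cyclic transformation group generated by $\overline a$. Since $\au$ has a unique bpi, Proposition~\ref{c3.l.q0isbpi} forces that bpi to be $q_0$, and Corollary~\ref{c3.c.Qbq0} then gives $\overline b = \widehat{q_0}$ for every $b \in A \setminus \{a\}$; that is, every letter other than $a$ collapses the whole state set onto the initial-final state. This is the one place where the unique-bpi hypothesis is used, and it is the structural fact that drives the entire argument.

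With that in hand, I would classify the elements of $M(\au)$. A word $x \in A^*$ consisting solely of $a$'s yields a power of the circular permutation $\overline a$, hence an element of the cyclic group $C_n = \langle \overline a \rangle$. If instead $x$ contains a letter from $A \setminus \{a\}$, write $x = u\,b\,a^m$, where $b$ is its last such letter. Using that a constant map absorbs anything applied before it, one gets $\overline u\,\overline b = \overline u\,\widehat{q_0} = \widehat{q_0}$, so that
\[
\overline x = \overline u\,\overline b\,\overline{a}^{\,m} = \widehat{q_0}\,\overline{a}^{\,m} = \widehat{q_m},
\]
a constant map (indices mod $n$). Thus every non-permutation in $M(\au)$ is one of the constants $\widehat{q_0}, \ldots, \widehat{q_{n-1}}$, in agreement with the rank bound of Proposition~\ref{c3.l.atmostbpi}. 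Conversely, $\widehat{q_0} = \overline b \in M(\au)$ and $\widehat{q_0}\,\overline{a}^{\,i} = \widehat{q_i}$, so all $n$ constant maps lie in $M(\au)$, as do all powers of $\overline a$.

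This shows that $M(\au)$ is exactly the submonoid of $\mathscr{T}(Q)$ generated by the circular permutation $\overline a$ together with all constant maps on $Q$; in the notation of the closure construction, $M(\au) = \widehat{C_n}$. Taking $X = Q$ (so $|X| = n$), the transformation group $(Q, C_n)$ is an instance of $\mathscr{C}_n$, and therefore
\[
(Q, M(\au)) = (Q, \widehat{C_n}) = \widehat{(Q, C_n)} = \widehat{\mathscr{C}_n}.
\]
Since every transformation monoid divides itself, this equality yields $(Q, M(\au)) \prec \widehat{\mathscr{C}_n}$, and in fact gives the stronger statement that the two coincide. I expect the only delicate point to be the reduction of an arbitrary word to a constant map: one must argue carefully, using the left-first composition convention together with the absorbing property of constant maps, that everything preceding the last non-$a$ letter is annihilated, leaving a pure power of $\overline a$ acting on $q_0$.
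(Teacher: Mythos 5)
Your proof is correct, but it takes a genuinely different route from the paper's. The paper stays inside the holonomy framework: using Corollary \ref{c3.c.Qbq0} and Proposition \ref{c3.l.atmostbpi} it observes that every element of $M(\au)$ has rank $n$ or $1$, so the skeleton space is $\mathscr{J} = \{Q\}\cup\mathscr{J}_1$; it then identifies $B(Q)=\mathscr{J}_1$, shows the holonomy group $G(Q)$ is cyclic of order $n$ generated by $\check{\overline{a}}$, and invokes Eilenberg's theorem to get the single component $\widehat{\mathscr{C}_n}$. You instead compute $M(\au)$ outright: every letter $b\neq a$ acts as the constant $\widehat{q_0}$, every word containing such a letter acts as some constant $\widehat{q_i}$, and hence $M(\au)$ is exactly $\langle\overline{a}\rangle$ together with all $n$ constants, i.e.\ the closure $\widehat{C_n}$. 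This bypasses the holonomy machinery entirely and yields the strictly stronger conclusion that $(Q,M(\au))$ \emph{equals} $\widehat{\mathscr{C}_n}$ rather than merely dividing it. The trade-off is that your argument leans on the very special fact that all non-permutations here are constants, which is particular to the one-bpi case; the paper's skeleton-and-holonomy computation is the template that carries over to the two-bpi case of Subsection 2.3, where the monoid cannot be written down so cleanly. One small point worth making explicit in your write-up: the existence of a letter $b\neq a$, which you need in order to conclude that all the constants actually lie in $M(\au)$, follows from Proposition \ref{c3.l.nobpi} because $\au$ has a bpi.
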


\begin{proof}
By Corollary \ref{c3.c.Qbq0}, we have $Q\overline b = \{q_0\}$, for all $b\in A\setminus \{a\}$. This implies that $\overline b = \overline c$, for all $b, c\in A\setminus \{a\}$. Thus, $M(\au)$ is generated by the set $\{\overline a , \overline b\}$.

For $\overline x \in M(\au)$, by Proposition \ref{c3.l.atmostbpi}, we have either $|Q \overline x| = n$ or $|Q \overline x| = 1$. Consequently, the skeleton space of $(Q, M(\au))$ is \[\mathscr{J} = \{Q\} \cup \mathscr{J}_1.\] Note that
\[K(Q) =\Big\{\overline{a^i}\;\Big|\; 1\leq i \leq n\Big\}\hspace{0.5cm}\mbox{and}\hspace{0.5cm} B(Q) = \mathscr{J}_1.\]

Clearly, $|B(Q)| = n$ and the holonomy group of $Q$ is \[G(Q) =\Big\{\check{\overline{a^i}}\;\Big|\; 1\leq i \leq n\Big\},\]
where each element $\check{\overline {a^i}}$ is the permutation on $B(Q)$ induced by the corresponding element $\overline{a^i} \in K(Q)$.  For $1 \leq i \leq n$, since $\overline {a^i} = {\overline a}^i$, we have  $\check{\overline {a^n}}= \check{(\overline a^n)} = \check{\overline\varepsilon}$. This implies that the holonomy group $G(Q)$ is a cyclic group of order $n$ generated by $\check{\overline a}$. Consequently, we have \[(Q, M(\au)) \prec \widehat{\mathscr{C}_n}.\]
\end{proof}

\subsection{CSFA with two bpis}

In this subsection, we investigate the holonomy decomposition of CSFA with two bpis. Here, $\au$ denotes a CSFA with two bpis. By Proposition \ref{c3.l.q0isbpi}, the initial-final state $q_0$ of $\au$ is a bpi. Let $q_m$, where $1 \le m < n$, be the other bpi of $\au$ so that $BPI(\au) = \{q_0, q_m\}$. Note that, by Proposition \ref{c3.l.nobpi}, we have $|A| \ge 2$.

\begin{lemma}\ \label{c3.l.r2}
\begin{enumerate}
\item[\rm(i)] For $b \in A$, if $\mbox{rank}(\overline b) = 2$, then $Q\overline b = BPI(\au)$.
\item[\rm(ii)] There exists a symbol $b \in A$ such that $Q\overline b = BPI(\au)$.
\end{enumerate}
\end{lemma}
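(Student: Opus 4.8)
The plan is to reduce both parts to a single combinatorial observation about indegrees, namely that for $b \in A \setminus \{a\}$ the image $Q\overline b$ is always contained in $BPI(\au)$, together with the fact recorded in the proof of Proposition \ref{c3.l.q0isbpi} that $q_{n-1}\overline b = q_0$ for every $b \in A$. For the containment I would argue exactly as in the proof of Proposition \ref{c3.l.atmostbpi}: since $\overline a$ is a permutation it contributes one to the indegree of every state, so any state lying in $Q\overline b$ for some $b \neq a$ receives an additional incoming arc and hence has indegree at least two, i.e. is a bpi. Thus $Q\overline b \subseteq BPI(\au) = \{q_0, q_m\}$ for all $b \in A \setminus \{a\}$; moreover $q_0 = q_{n-1}\overline b \in Q\overline b$, so each such image in fact contains $q_0$.

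With this in hand, part (i) is immediate. If $\mbox{rank}(\overline b) = 2$ then $\overline b$ is not a permutation (this uses $n > 2$; the degenerate case $n = 2$ forces $Q = BPI(\au)$ and the conclusion holds trivially), so $b \ne a$ and therefore $Q\overline b \subseteq \{q_0, q_m\}$ by the observation above. Since $|Q\overline b| = \mbox{rank}(\overline b) = 2 = |BPI(\au)|$, the inclusion must be an equality, which gives $Q\overline b = BPI(\au)$.

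For part (ii) I would invoke the reverse direction of the indegree count applied to $q_m$. Because $q_m$ is a bpi its indegree is at least two, and as $\overline a$ accounts for exactly one incoming arc there must be some $c \in A \setminus \{a\}$ with $q_m \in Q\overline c$. For this $c$ we then have $\{q_0, q_m\} \subseteq Q\overline c$, since $q_0 \in Q\overline c$ by the remark above and $q_m \in Q\overline c$ by construction, while simultaneously $Q\overline c \subseteq \{q_0, q_m\}$ by the containment of part (i)'s setup. Hence $Q\overline c = \{q_0, q_m\} = BPI(\au)$, as required.

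I do not anticipate a genuine obstacle here, as the only real content is the indegree bookkeeping already essentially carried out in Proposition \ref{c3.l.atmostbpi}. The one point requiring slight care is the interplay between the two facts $Q\overline b \subseteq BPI(\au)$ and $q_0 \in Q\overline b$, since part (ii) needs both an upper and a lower bound on $Q\overline c$ in order to pin the image down exactly; one should also keep the trivial case $n = 2$ in mind when phrasing part (i), where a rank-two map is automatically onto $Q = BPI(\au)$.
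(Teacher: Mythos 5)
Your proposal is correct. The foundation is identical to the paper's: both arguments start from the indegree observation that $\overline a$ contributes exactly one incoming arc to each state, so $Q\overline b \subseteq BPI(\au) = \{q_0, q_m\}$ for every $b \in A \setminus \{a\}$, and part (i) then follows by a cardinality count (the paper dismisses it as ``straightforward''; you are in fact more careful, handling the degenerate possibility that a rank-two letter is $a$ itself when $n = 2$). Where you diverge is part (ii). The paper argues by contradiction: if no non-$a$ letter had image $\{q_0, q_m\}$, each such image would be a singleton; the image $\{q_m\}$ is excluded because it would create a loop at $q_m$, i.e.\ a cycle avoiding $q_0$, so all images would be $\{q_0\}$, contradicting that $q_m$ is a bpi. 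You instead argue directly: since $q_m$ is a bpi and $\overline a$ accounts for only one of its incoming arcs, some $c \ne a$ has $q_m \in Q\overline c$; combining this with $q_0 = q_{n-1}\overline c \in Q\overline c$ (the fact established inside the proof of Proposition \ref{c3.l.q0isbpi}) and the containment $Q\overline c \subseteq \{q_0, q_m\}$ pins the image down exactly. Both routes are sound and of comparable length; yours is constructive and yields the slightly stronger intermediate fact that $q_0$ lies in $Q\overline b$ for \emph{every} $b$, while the paper's reuses the no-loop-off-$q_0$ property of SFA instead of reaching back into the proof of an earlier proposition. No gaps.
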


\begin{proof} We first note that $\overline a$ contributes one to the indegree of each state in $Q$.
Since $BPI(\au) = \{q_0, q_m\}$, we have $Q \overline b \subseteq \{q_0, q_m\}$,  for all $b \in A\setminus \{a\}$.
\begin{enumerate}
\item[\rm(i)] Straightforward from the above statement.
\item[\rm(ii)]Let us assume that $Q\overline b \neq \{q_0, q_m\}$, for all $b \in A\setminus \{a\}$. Then, for $b \in A\setminus \{a\}$, either $Q\overline b =\{q_0\}$ or $Q\overline b = \{q_m\}$. For some $b \in A\setminus \{a\}$, if $Q\overline b =\{q_m\}$, then there is a loop at $q_m$; which is not possible. Consequently, for all $b \in A\setminus \{a\}$, $Q\overline b =\{q_0\}$. This implies $BPI(\au) = \{q_0\}$; a contradiction.  Hence, there exists $b \in A$ such that $Q\overline b = BPI(\au)$.
\end{enumerate}
\end{proof}

The following lemma provides the skeleton space of the transformation monoid $(Q, M(\au))$.

\begin{lemma}\label{c3.l.sshentes}
The skeleton space of the transformation monoid $(Q, M(\au))$ is given by \[\mathscr{J} = \{Q\} \cup \mathscr{J}_2 \cup \mathscr{J}_1,\] where \[\mathscr{J}_2 = \Big\{\{q_0, q_m\}\overline{a^i}\;\Big|\;1\leq i \leq n \Big\}.\]
\end{lemma}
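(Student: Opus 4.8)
We need to show the skeleton space of $(Q, M(\mathcal{A}))$ consists of exactly three tiers: the whole state set $Q$ (rank $n$), the sets $\{q_0,q_m\}\overline{a^i}$ of cardinality $2$, and the singletons $\mathscr{J}_1$. The key structural input is Proposition \ref{c3.l.atmostbpi}: since $|BPI(\mathcal{A})| = 2$, every non-permutation in $M(\mathcal{A})$ has rank at most $2$. Combined with Proposition \ref{c3.l.ucp}, every permutation in $M(\mathcal{A})$ is a power of $\overline{a}$ and hence has image $Q$. So the possible ranks of elements of $M(\mathcal{A})$ are exactly $n$, $2$, or $1$, and the skeleton space can only contain sets of these three cardinalities together with the formally-adjoined singletons.

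**The plan.** First I would recall that $\mathscr{J} = \{Q\overline{x} \mid \overline{x}\in M(\mathcal{A})\} \cup \bigcup_{q\in Q}\{\{q\}\}$ by definition. The rank-$n$ elements are precisely the powers of $\overline{a}$, each with image $Q$, contributing only $\{Q\}$. The singletons contribute $\mathscr{J}_1$. The entire content of the lemma therefore reduces to identifying which cardinality-$2$ sets arise as images, i.e. proving $\mathscr{J}_2 = \{\{q_0,q_m\}\overline{a^i} \mid 1\le i\le n\}$.

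**The main step.** The heart of the argument is showing that every rank-$2$ image equals $\{q_0,q_m\}\overline{a^i}$ for some $i$, and conversely that each such set genuinely lies in $\mathscr{J}$. For the forward inclusion, suppose $\overline{x}$ is a non-permutation of rank $2$. Since $\overline{x}\neq\overline{a}^k$, the word $x$ contains some letter $b\in A\setminus\{a\}$; write $x = ua^j$ where $u$ ends in such a $b$ or $x = u$ outright, so that $Q\overline{x} = (Q\overline{u})\,\overline{a^j}$. The rank-$2$ condition forces $Q\overline{u}$ to have rank exactly $2$, and by Lemma \ref{c3.l.r2}(i) any rank-$2$ image of a single letter is $\{q_0,q_m\} = BPI(\mathcal{A})$; I would promote this to the claim that $Q\overline{u} = \{q_0,q_m\}\overline{a^i}$ for an appropriate $i$ by tracking the trailing block of $a$'s before the last non-$a$ letter. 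Then $Q\overline{x} = \{q_0,q_m\}\overline{a^{i+j}}$, as desired. For the reverse inclusion, Lemma \ref{c3.l.r2}(ii) supplies a letter $b$ with $Q\overline{b} = \{q_0,q_m\}$, so $\{q_0,q_m\}\overline{a^i} = Q\,\overline{b a^i}\in\mathscr{J}$ for every $i$.

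**The expected obstacle.** The delicate point is the bookkeeping that reduces an arbitrary rank-$2$ image $Q\overline{x}$ to the normal form $\{q_0,q_m\}\overline{a^i}$: one must argue that once a non-$a$ letter collapses the image down to the two bpis, every subsequent letter either preserves rank $2$ (and by the same indegree argument must again send the pair into $\{q_0,q_m\}$ after applying a power of $\overline{a}$) or further collapses to a singleton, which would land in $\mathscr{J}_1$ rather than $\mathscr{J}_2$. Handling the interplay of interleaved $a$'s and non-$a$ letters cleanly—so that the final rank-$2$ image is always of the stated form $\{q_0,q_m\}\overline{a^i}$—is where the real care is needed; the rest is a direct application of the earlier propositions.
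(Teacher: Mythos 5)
Your proposal is correct and follows essentially the same route as the paper: bound the possible cardinalities via Proposition \ref{c3.l.atmostbpi}, obtain the reverse inclusion from Lemma \ref{c3.l.r2}(ii) via the words $ba^i$, and for the forward inclusion isolate the last non-$a$ letter of $w$. The bookkeeping you flag as delicate is dispatched in the paper in one line: the image of the prefix of $w$ ending at that last non-$a$ letter $b_k$ is contained in $Q\overline{b_k}=\{q_0,q_m\}$ and has cardinality two, hence equals $\{q_0,q_m\}$, after which applying the trailing $\overline{a^{i_{k+1}}}$ gives the stated normal form.
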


\begin{proof}
In view of Proposition \ref{c3.l.atmostbpi}, other than $Q$ and singletons, the skeleton space $\mathscr{J}$ can have some sets of size two. Thus, it is sufficient to determine $\mathscr{J}_2$.

By Lemma \ref{c3.l.r2}(ii), there exists an input symbol $b \in A\setminus \{a\}$ such that
$Q\overline b = \{q_0, q_m\}$. Therefore, for all $1\leq i \leq n$, the image set \[Q\overline{ba^i} = \{q_0, q_m\}\overline{a^i} \in \mathscr{J}_2.\] Thus, we have \[\Big\{\{q_0, q_m\}\overline{a^i}\;\Big|\;1\leq i \leq n \Big\}\subseteq \mathscr{J}_2.\]

Let us assume that $Q\overline w \in \mathscr{J}_2$, for some $w \in A^*$. Then $w$ is of the form
\[w = a^{i_1} b_1 a^{i_2} b_2  \cdots a^{i_{k}} b_k a^{i_{k+1}},\]
for $i_j \geq 0$ ($1 \le j \le k+1$) and $b_i \in A$ ($1\leq i \leq k$) such that the rank of each function $\overline{b_i}$ is two (cf. Proposition \ref{c3.l.atmostbpi}). Write $w =  a^{i_1} b_1ub_k a^{i_{k+1}}$, where $u = a^{i_2} b_2  \cdots a^{i_{k}}$. Since rank($\overline{b_1ub_k}$) = rank($\overline{b_k}$) = 2,  we have
\[Q\overline{b_1ub_k} = Q\overline b_k = \{q_0, q_m\},\] by Lemma \ref{c3.l.r2}(i). Consequently, \[Q\overline{w} = Q\overline{a^{i_1} b_1ub_ka^{i_{k+1}}} = \{q_0, q_m\}\overline{a^{i_{k+1}}}.\]
Hence, \[\mathscr{J}_2 = \Big\{\{q_0, q_m\}\overline{a^i}\;\Big|\;1\leq i \leq n \Big\}.\]
\end{proof}

\begin{remark}\label{c3.r.j2}
As shown in Example \ref{c3.e.even}, the cardinality of $\mathscr{J}_2$ is not necessarily $n$.
\end{remark}

\begin{example}\label{c3.e.even}
The automaton $\au$ given in Figure \ref{fig3.2} is a CSFA with $BPI(\au) = \{q_0, q_2\}$, and $|Q| = 4$ .
Here, $Q\overline b = \{q_0, q_2\}$ and  we observe that \[\{q_0, q_2\}\overline a = \{q_1, q_3\}, \{q_0, q_2\}\overline{a^2} = \{q_0, q_2\},\; \mbox{ so that }\; |\mathscr{J}_2| = 2.\]
\begin{figure}[t]
\entrymodifiers={++[o][F-]} \SelectTips{cm}{}
\[\xymatrix{*\txt{} & *++[o][F=]{q_0} \ar[dr]^a \ar@(r,u)[]_b& *\txt{}\\
q_3 \ar[ur]^a \ar@/^1.5pc/[ur]^b & *\txt{} & q_1 \ar[dl]^a \ar@/^1.5pc/[dl]^b\\
 *\txt{} & q_2 \ar[ul]^a \ar[uu]^b & *\txt{} }\]
\caption{A CSFA $\au$ with two bpis}
\label{fig3.2}
\end{figure}
\end{example}

\begin{lemma}\label{c3.l.q0xqk}
There exists $x \in A^*$ such that $q_0\overline{x} = q_m$ and $q_m\overline{x} = q_0$.
\end{lemma}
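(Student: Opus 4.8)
The plan is to exhibit a single word $x$ whose induced map $\overline{x}$ restricts to the transposition of $q_0$ and $q_m$. First I would invoke Lemma~\ref{c3.l.r2}(ii) to fix a letter $b \in A\setminus\{a\}$ with $Q\overline{b} = BPI(\au) = \{q_0, q_m\}$. Since a loop at $q_m$ would be a cycle avoiding $q_0$, the no-loop reasoning already used in Lemma~\ref{c3.l.r2} forces $q_m\overline{b} = q_0$; thus $b$ itself meets the requirement $q_m \mapsto q_0$. If in addition $q_0\overline{b} = q_m$, then $x = b$ is the desired word and we are done, so the substantive case is $q_0\overline{b} = q_0$, which I assume from here on.

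In this case $\overline{b}$ collapses $\{q_0,q_m\}$ onto $q_0$, so the transposition must be produced by first rotating with $\overline{a}$. Writing $q_i = q_0\overline{a^{\,i}}$ and $q_m\overline{a^{\,i}} = q_{(i+m)\bmod n}$, the word $x = a^{\,i}b$ sends $q_0 \mapsto q_i\overline{b}$ and $q_m \mapsto q_{(i+m)\bmod n}\overline{b}$. Hence it suffices to find an index $i$ in the set $S = \{\, i : q_i\overline{b} = q_m\,\}$ of $b$-preimages of $q_m$ for which $(i+m)\bmod n$ lies in the complementary set $\{\, i : q_i\overline{b} = q_0\,\}$; for such an $i$ the word $a^{\,i}b$ is exactly the required swap. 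Observe that $S \neq \varnothing$, since $q_m$ lies in the image of $\overline{b}$. I would first pin down $S$ by a cycle argument: if some $q_i$ with $m < i \le n-1$ satisfied $q_i\overline{b} = q_m$, then $q_m\xrightarrow{a^{\,i-m}}q_i\xrightarrow{b}q_m$ would be a cycle passing through none of $q_0, \ldots, q_{m-1}$ and hence avoiding $q_0$, contradicting the definition of an SFA. Together with $q_0\overline{b} = q_0$ and the absence of a loop at $q_m$, this yields $S\subseteq\{1,\dots,m-1\}$.

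The main obstacle is to show that a good index exists, i.e.\ that $S$ is not carried into itself by the shift $\sigma\colon i\mapsto(i+m)\bmod n$. Suppose, for contradiction, that $\sigma(S)\subseteq S$; since $\sigma$ is a bijection of the residues modulo $n$ and $S$ is finite, $S$ would be $\sigma$-invariant, hence a nonempty union of cosets of the cyclic subgroup generated by $m$, equivalently of the subgroup generated by $d = \gcd(m,n)$. Any such coset contains one residue in each block of $d$ consecutive residues, and therefore spans a range of $n-d$; since $d$ divides both $m$ and $n$ with $m<n$, one has $n-d \ge m$, which strictly exceeds the range $m-2$ of $\{1,\dots,m-1\}$. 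Thus no coset can be contained in $\{1,\dots,m-1\}$, contradicting $S\subseteq\{1,\dots,m-1\}$. Consequently some $i\in S$ has $\sigma(i)\notin S$, and $x = a^{\,i}b$ is the desired word. I expect this counting step---ruling out the $\sigma$-invariant configuration---to be the crux, while the reductions preceding it are routine consequences of the SFA cycle condition and Lemma~\ref{c3.l.r2}.
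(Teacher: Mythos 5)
Your proof is correct, but the way you produce the witness $x$ is genuinely different from the paper's. Both arguments dispose of the easy cases the same way: a letter $b$ with $Q\overline{b}=\{q_0,q_m\}$ cannot fix $q_m$ (that would be a loop avoiding $q_0$), so if it moves $q_0$ then $x=b$ works, and otherwise the cycle condition confines the set $S$ of indices $i$ with $q_i\overline{b}=q_m$ to $\{1,\dots,m-1\}$. The divergence is in finding $i\in S$ with $(i+m)\bmod n\notin S$. The paper takes the \emph{least} such index $t$, sets $x=a^{t}b$, and shows $q_m\overline{x}=q_0$ directly: a cycle at $q_m$ labelled $a^{t}b$ would have to cross $q_0$ inside the $a$-block (since $q_0\overline{b}=q_0$), producing an index $t_2<t$ in $S$ and contradicting minimality. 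You instead suppose $\sigma(S)\subseteq S$ for $\sigma:i\mapsto i+m\pmod n$, deduce that $S$ is a union of cosets of $\langle\gcd(m,n)\rangle$ in $\mathbb{Z}_n$, and rule this out because a coset has range $n-d\ge m$ while $\{1,\dots,m-1\}$ has range $m-2$; this is valid (the inequality holds because $n-m$ is a positive multiple of $d$), but heavier than necessary, since the minimal element $t$ of $S$ already satisfies $\sigma(t)\notin S$ by a two-line check: if $t+m<n$ then $\sigma(t)=t+m>m-1$, and if $t+m\ge n$ then $\sigma(t)=t+m-n<t=\min S$. That check is exactly what the paper's cycle argument encodes, so your coset counting buys generality you do not need here. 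One small housekeeping point: Lemma~\ref{c3.l.r2}(ii) as stated only provides $b\in A$, and the paper separately handles the possibility $b=a$ (which forces $Q=\{q_0,q_m\}$), whereas you assume $b\in A\setminus\{a\}$ outright; this is harmless, since the proof of that lemma in fact yields $b\neq a$ and the paper uses the stronger form in Lemma~\ref{c3.l.sshentes}, but it deserves a sentence.
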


\begin{proof}
If there exists $b \in A \setminus \{a\}$ such that $q_0\overline{b} \ne q_0$, then clearly $x = b$ will serve the purpose. Otherwise, we have
$q_0\overline{b} = q_0$, for all $b \in A \setminus \{a\}$. However, by Lemma \ref{c3.l.r2}, there exist a symbol $c \in A$ such that $Q\overline{c} = \{q_0, q_m\}$. If $c = a$, then $Q = \{q_0, q_m\}$ and the result is straightforward.

Let us assume that $c \ne a$. Clearly, $q_0\overline{c} = q_0$ and there exists a state $q_i$ (with $1 \le i < m$) such that $q_i\overline{c} = q_m$. Let $t$ (with $1 \le t < m$) be the least number such that $q_t\overline{c} = q_m$. Choose $x = a^tc$ and observe that $q_0\overline{x} = q_m$.
We claim that $q_m\overline{x} = q_0$.

On the contrary, assume $q_m\overline{x} \ne q_0$. Then, $q_m\overline{x} = q_m$ so that there is a cycle from $q_m$ to $q_m$ labeled $x$. Thus, the cycle should pass through $q_0$.  Since $q_0\overline c = q_0$, there exist $t_1$ and $t_2$ ($1 \leq t_1, t_2 < t$) with $t_1 + t_2 = t$ such that
\[q_m \overline {a^{t_1}} = q_0 \;\;\mbox{and}\;\; q_0 \overline{a^{t_2}c} = q_m.\]
Note that $q_0 \overline{a^{t_2}c} = q_{t_2}\overline{c} = q_m$. This contradicts the choice of $t$, as $t_2 < t$. Thus, $q_m\overline
{x} = q_0$.
\end{proof}

\begin{theorem}\label{c3.t.ck2}
If $\au$ is a CSFA with $BPI(\au) = \{q_0, q_m\}$, then \[(Q, M(\au)) \prec  \widehat{\mathscr{C}_2} \wr \widehat{\mathscr{C}_r},\] where $r$ (with $1 \le r \le n$) is the smallest number such that $\{q_0, q_m\}\overline{a^r} = \{q_0, q_m\}$.
\end{theorem}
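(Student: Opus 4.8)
The plan is to apply Eilenberg's holonomy decomposition theorem directly to $(Q, M(\au))$, using the skeleton space computed in Lemma~\ref{c3.l.sshentes}. First I would observe that the members of $\mathscr{J} = \{Q\} \cup \mathscr{J}_2 \cup \mathscr{J}_1$ of cardinality greater than one fall into exactly two levels, namely $\{Q\}$ (of cardinality $n$) and the size-two sets in $\mathscr{J}_2$. Since it will turn out that $B(Q) = \mathscr{J}_2$ while $B(T) \subseteq \mathscr{J}_1$ for each $T \in \mathscr{J}_2$, the chain $Q \supsetneq \{q_0,q_m\} \supsetneq \{q_0\}$ confirms that $(Q, M(\au))$ has height $2$, so the theorem produces a two-factor wreath product $\widehat{\mathscr{H}_1} \wr \widehat{\mathscr{H}_2}$. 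It then remains to identify the holonomy group at each level.

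Next I would compute the holonomy group of $Q$. I would first verify that $B(Q) = \mathscr{J}_2$: every state lies in some member of $\mathscr{J}_2$, because $\overline{a}$ is an $n$-cycle, so the orbit $\{q_0\overline{a^i}\}$ exhausts $Q$ and hence $\bigcup_i \{q_0,q_m\}\overline{a^i} = Q$. Consequently no singleton is maximal below $Q$, and the maximal proper members of $\mathscr{J}$ below $Q$ are precisely the size-two sets. Writing $S_i = \{q_0,q_m\}\overline{a^i}$, the defining property of $r$ gives $|B(Q)| = |\mathscr{J}_2| = r$. Since every permutation in $M(\au)$ is a power of $\overline{a}$ (any non-permutation has rank at most $2$ by Proposition~\ref{c3.l.atmostbpi}), we have $K(Q) = \langle \overline{a}\rangle$, and $\overline{a^i}$ acts on $B(Q)$ by the shift $S_j \mapsto S_{j+i}$. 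This action has order $r$, and its generator is the single $r$-cycle $(S_1\, S_2\, \cdots\, S_r)$; hence $G(Q)$ is cyclic of order $r$ generated by a circular permutation, so $(B(Q), G(Q)) \cong \mathscr{C}_r$.

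For the size-two level, I would note that all members of $\mathscr{J}_2$ are $\sim$-equivalent: $\overline{a^i}$ carries $\{q_0,q_m\}$ to $S_i$ and $\overline{a^{n-i}}$ carries $S_i$ back to $\{q_0,q_m\}$, so there is a single equivalence class. By the isomorphism of holonomy groups of equivalent sets, it suffices to treat the representative $T = \{q_0,q_m\}$, for which $B(T) = \{\{q_0\},\{q_m\}\}$ has two elements. The element $\overline{x}$ supplied by Lemma~\ref{c3.l.q0xqk} satisfies $q_0\overline{x} = q_m$ and $q_m\overline{x} = q_0$, so $T\overline{x} = T$ and $\overline{x} \in K(T)$ induces the transposition of $\{q_0\}$ and $\{q_m\}$. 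As $|B(T)| = 2$, this forces $G(T)$ to be the full cyclic group of order $2$, whence $(B(T), G(T)) \cong \mathscr{C}_2$.

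Finally, assembling these with one equivalence class at each level gives $\mathscr{H}_1 = \mathscr{C}_2$ at the lower level and $\mathscr{H}_2 = \mathscr{C}_r$ at the top, so the decomposition theorem yields $(Q, M(\au)) \prec \widehat{\mathscr{C}_2} \wr \widehat{\mathscr{C}_r}$. The step I expect to be the main obstacle is the top level: one must pin down $B(Q)$ exactly, ruling out stray singletons through the $n$-cycle covering argument, and then check that the induced action of $\langle\overline{a}\rangle$ is genuinely a single $r$-cycle of order precisely $r$ rather than a smaller or reducible permutation group. The lower-level factor $\mathscr{C}_2$ follows cleanly once Lemma~\ref{c3.l.q0xqk} is invoked.
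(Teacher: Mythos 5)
Your proposal is correct and follows essentially the same route as the paper's proof: determine the skeleton space via Lemma~\ref{c3.l.sshentes}, identify $K(Q)=\{\overline{a^i}\}$ and $B(Q)=\mathscr{J}_2$ to get $G(Q)\cong C_r$, and use Lemma~\ref{c3.l.q0xqk} on the representative $\{q_0,q_m\}$ to get $G(P)\cong C_2$ before assembling the wreath product. Your extra care in ruling out singletons from $B(Q)$ via the orbit-covering argument is a welcome explicit justification of a step the paper leaves implicit, but it is not a different approach.
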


\begin{proof}
From Lemma \ref{c3.l.sshentes}, the skeleton space of $(Q, M(\au))$ is
\[\mathscr{J} = \{Q\} \cup \mathscr{J}_2 \cup \mathscr{J}_1\] in which all the elements of $\mathscr{J}_2$ are equivalent to each other.

For $1 \le i \le n$, note that $\overline{a^i}$ permutes the elements of $Q$ and, for $x \in A^*$, if $\overline{x} \ne \overline{a^i}$, then $\overline{x}$ is not a permutation on $Q$ (cf. Proposition \ref{c3.l.ucp}). Consequently,
\[K(Q) = \Big\{\overline{a^i}\;\Big|\; 1\leq i \leq n \Big\}.\] Since all the elements of $\mathscr{J}_2$ are maximal in $Q$, we have $B(Q) = \mathscr{J}_2$. Let $r$ (with $1 \leq r \leq n$) be the smallest integer such that $\{q_0, q_m\}\overline{a^r} = \{q_0, q_m\}$ so that $|B(Q)| = r$. Consequently, the holonomy group \[G(Q) =\Big\{\check{\overline{a^i}}\;\Big|\; 1\leq i \leq r \Big\},\] where each function $\check{\overline{a^i}}$ is a permutation on $B(Q)$ induced by the corresponding function $\overline{a^i} \in K(Q)$. Since $\check{\overline{a^i}} = \check{\overline{a}}^i$, the holonomy group $G(Q)$ is  a cyclic group of order $r$ generated by $\check{\overline{a}}$. Thus, \[(B(Q), G(Q)) = \mathscr{C}_r.\]

Let $P = \{q_0, q_m\}$ be a representative in $\mathscr{J}_2$. Clearly, \[B(P) = \{\{q_0\}, \{q_m\}\}.\] By Lemma \ref{c3.l.q0xqk}, there exist $x \in A^*$ such that $q_0\overline{x} = q_m$ and $q_m\overline{x} = q_0$, so that $K(P) = \{\overline x, \overline{\varepsilon}\}$. Consequently, the holonomy group $G(P) = C_2$ and hence, \[(B(P), G(P)) = \mathscr{C}_2.\]

Thus, the holonomy decomposition of $\au$ is given by
\[(Q, M(\au))\prec \widehat{\mathscr{C}_2} \wr \widehat{\mathscr{C}_r}.\]
\end{proof}

\begin{corollary}\label{c3.c.oddhd}
Let $n$ be an odd number; if $\au$ is a CSFA with two bpis such that $|Q| = n$, then \[(Q, M(\au)) \prec  \widehat{\mathscr{C}_2} \wr \widehat{\mathscr{C}_n}.\]
\end{corollary}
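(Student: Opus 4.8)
The plan is to derive the corollary directly from Theorem~\ref{c3.t.ck2} by showing that the integer $r$ appearing there equals $n$ whenever $n$ is odd. Recall that $r$ is the smallest positive integer with $\{q_0, q_m\}\overline{a^r} = \{q_0, q_m\}$, so the entire task reduces to analysing this single equation of two-element subsets of $Q$.

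First I would make the action of $\overline{a}$ explicit. Since $\overline{a}$ is the circular permutation with cyclic ordering $q_0, q_1, \ldots, q_{n-1}$, we have $q_i\overline{a} = q_{i+1}$ with indices read modulo $n$, and hence $\{q_0, q_m\}\overline{a^r} = \{q_r, q_{m+r}\}$ (indices modulo $n$). Because $1 \le m < n$, the set $\{q_0, q_m\}$ consists of two distinct states, so the equation $\{q_r, q_{m+r}\} = \{q_0, q_m\}$ splits into exactly two cases: either (i) $q_r = q_0$ and $q_{m+r} = q_m$, or (ii) $q_r = q_m$ and $q_{m+r} = q_0$. In terms of congruences modulo $n$, case (i) reads $r \equiv 0$ and $m+r \equiv m$, while case (ii) reads $r \equiv m$ and $m+r \equiv 0$.

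The crux --- though it is short --- is to rule out case (ii) using the parity hypothesis. In case (ii), combining $r \equiv m$ and $m+r \equiv 0$ yields $2m \equiv 0 \pmod{n}$. Since $n$ is odd, $\gcd(2,n) = 1$, which forces $m \equiv 0 \pmod{n}$, contradicting $1 \le m < n$. Thus case (ii) cannot occur, and only case (i) survives; both of its congruences reduce to $r \equiv 0 \pmod{n}$, whose smallest positive solution is $r = n$.

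Having established $r = n$, I would simply substitute into Theorem~\ref{c3.t.ck2} to conclude $(Q, M(\au)) \prec \widehat{\mathscr{C}_2} \wr \widehat{\mathscr{C}_n}$. The only place the oddness of $n$ enters is the elementary number-theoretic step eliminating case (ii); everything else is bookkeeping with the cyclic indexing. I expect no genuine obstacle here, since the work of computing the holonomy groups has already been carried out in Theorem~\ref{c3.t.ck2}, and the corollary is essentially the observation that an odd cycle length prevents the pair $\{q_0,q_m\}$ from returning to itself before a full revolution.
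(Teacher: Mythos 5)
Your proposal is correct and follows essentially the same route as the paper: both reduce the corollary to showing $r = n$ in Theorem~\ref{c3.t.ck2} and rule out the case where $\overline{a^r}$ swaps $q_0$ and $q_m$ by an elementary parity argument (the paper derives $2r = n$, you derive $2m \equiv 0 \pmod{n}$; these are equivalent contradictions with $n$ odd). No issues.
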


\begin{proof}
From Theorem \ref{c3.t.ck2}, we have \[(Q, M(\au))\prec \widehat{\mathscr{C}_2} \wr \widehat{\mathscr{C}_r},\] where $r$ (with $1 \le r \leq n$) is the smallest number such that $\{q_0, q_m\}\overline{a^r} = \{q_0, q_m\}$. We claim that  $r = n$. If $r < n$, since $\{q_0, q_m\}\overline{a^r} = \{q_0, q_m\}$ and $\overline a$ is a circular permutation on $Q$, it follows that $q_0\overline{a^r} = q_m$, and $q_m\overline{a^r} = q_0$. This implies that $q_0\overline{a^{2r}} = q_0$ with $1 < 2r < 2n$. Therefore, $2r = n$; a contradiction. Hence, \[(Q, M(\au))\prec \widehat{\mathscr{C}_2} \wr \widehat{\mathscr{C}_n}.\]
\end{proof}

\section{Conclusion}

In this work, we have initiated the investigations on the holonomy decomposition of circular semi-flower automata (CSFA), classified by their number of bpis. In fact, we have ascertained the holonomy decompositions of CSFA with at most two bpis. Our experiments for the holonomy decomposition of CSFA with more than two bpis over a numerous examples exhibit that their structure is much more complicated. However, we feel that the approach adopted in this paper may be useful to target the holonomy decomposition of CSFA with arbitrary number of bpis. In general, one can look for the holonomy decomposition of SFA. There is a lot more to investigate on the structure of automata with cycles, as a more general problem.


\end{document}